\newtheorem{theorem}{Theorem}
\newtheorem{corollary}{Corollary}
\newtheorem{lemma}{Lemma}
\newtheorem{proposition}{Proposition}
\theoremstyle{remark}
\newtheorem{remark}{Remark}
\theoremstyle{remark}
\newtheorem{example}{Example}
\theoremstyle{remark}
\title{Equivalent Choice Functions and Stable Mechanisms}
\author{Jan Christoph Schlegel\footnote{The current paper extends and supersedes a paper entitled "Virtual Demand and Stable Mechanisms", an extended abstract of which appeared in the Proceedings of the 2016 ACM Conference on Economics and Computation (EC’16). Chapter 4 of the original paper has been published as an independent working paper with the title ``Group-Strategy-Proof Mechanisms for Job Matching with Continuous Transfers". I am grateful to my adviser Bettina Klaus for many useful comments that greatly improved the paper and to Sangram Kadam for many insightful discussions. I thank Battal Do\u{g}an, Federico Echenique, Ravi Jagadeesan, Flip Klijn, Fuhito Kojima, Maciej Kotowski, Jordi Mass\'{o}, an anonymous referee, participants of the 2016 Meeting of the Social Choice and Welfare Society, the 17th ACM Conference on Economics and Computation, the 2018 ASSA meetings, the 5th Match-UP workshop, the Lisbon game theory meetings, seminar participants at Bocconi University, in Marseille, in Maastricht, at City University and at Hitotsubashi University for comments on a previous version of the paper. I gratefully acknowledge financial support by the Swiss National Science Foundation (SNSF) under project 100018-150086.
	}\\ \small Department of Economics, City, University of London, United Kingdom\\{\small\tt jansc@alumni.ethz.ch}}
\date{} % Activate to display a given date or no date (if empty),
\begin{document}
	\maketitle
	\begin{abstract}
		We study conditions for the existence of stable and
		(group)-strategy-proof mechanisms in a many-to-one matching model with
		contracts if students' preferences are monotone in contract terms.
		We show that ``equivalence", properly defined, to a choice profile
		under which contracts are substitutes and the law of aggregate demand holds
		is a necessary and sufficient condition for the existence of a
		stable and (group)-strategy-proof mechanism.
		
		Our result can be interpreted as a (weak) embedding result for
		choice functions under which contracts are observable substitutes
		and the observable law of aggregate demand holds.
		\emph{JEL-classification:} C78, D47\\
		\emph{Keywords:} Matching with contracts; College admission; Substitutes; Observable Substitutes; Strategy-Proofness; Deferred Acceptance
	\end{abstract}
	\section{Introduction}
	Centralized clearing houses based on the deferred-acceptance
	mechanism are at the heart of many successful matching
	markets~\cite[]{Roth1984a,AbdulkadirogluSonmez2003,SonmezSwitzer2013,Sonmez2013}.
	Deferred-acceptance mechanisms are appealing because they produce
	stable outcomes, meaning that no subgroup of agents can find a
	mutually beneficial deviation and, thus, would have a reason to
	contract outside the market.\footnote{\cite{Roth.1991} provides
		evidence that clearing houses using unstable mechanisms tend to fail
		in practice.} Moreover, it is safe for the applying side of the
	market to report true preferences to the mechanism. Thus, the
	mechanism successfully aggregates the information in the market and
	levels the playing field for both naive and sophisticated participants.
	
	In some applications, the market does not only match agents, but
	determines also the contractual details of the match. In a labor
	market, firms and workers may have some discretion on how to set the
	salary.  In a
	college admission problem, students can be admitted with or without
	a stipend. In the cadet-to-branch match~\cite[]{SonmezSwitzer2013},
	cadets can choose between different lengths of service time in
	exchange for a higher priority in their branch of choice. These markets can be understood as hybrids between
	matching markets and auctions and have first been analyzed in the
	seminal paper of~\cite{KelsoCrawford1982}, with later important
	extensions
	by~\cite{Roth1984b,Fleiner2003,HatfieldMilgrom2005,HatfieldKojima2009}
	among others. The central mechanism design result in this context states that in a model of many-to-one matching with
	endogenous contracting, a generalized version of the deferred-acceptance mechanism can be defined, and is stable and
	group-strategy-proof under the assumption that contracts are
	substitutes for colleges\footnote{In the following we call the
		applying side of the market ``students" and the admitting side of
		the market ``colleges" motivated by the application of college
		admission. However, the model equally applies to the other
		applications mentioned in this introduction.} and the law of
	aggregate demand holds.\footnote{This means
		that if we expend the choice set of a college, an equal or larger number of contracts is chosen.} Moreover, the deferred acceptance algorithm is outcome equivalent to the cumulative offer algorithm which is a natural modification of the deferred acceptance algorithm. In the cumulative offer algorithm not only acceptances, but also rejections are deferred; in each round of the algorithm, colleges are allowed to choose among all applications they have received so far including those that they have previously rejected.

	In this paper, we study stable and (group)-strategy-proof mechanisms for matching markets with contracts in which students have monotone preferences in contract-terms,\footnote{Colleges' preferences are not necessarily
		monotone in our analysis.} and explore how much the assumptions of substitutability and the law of aggregate demand can be relaxed in this context.  Our research question is motivated by the observation that in many applications, there is a natural ordering over contracts terms, and it is reasonable to assume that preferences are monotone with respect to the ordering:  In college admission problems with
	stipends~\cite[]{Hassidim2017,AbizadaDur2017}, it is natural to assume that students prefer being admitted with a
	stipend to being admitted without a stipend at the same college, or
	more generally, being admitted with a higher stipend than a lower
	stipend at the same college.\footnote{There is empirical evidence
		that monotonicity is violated for reported preferences of some
		participants in the Israeli psychology match, see~\citealp{hassidim2016}. But it seems
		likely that these monotonicity violations can be attributed to students failing to play the
		weakly dominant strategy of revealing preferences truthfully, rather
		than to them having non-monotone preferences.} In the
	cadet-to-branch matching model of~\cite{SonmezSwitzer2013} the
	contract-term is the service time in the military and it is natural
	to assume that cadets prefer a
	shorter to a longer service in the same branch. This assumption is for example made in the analysis
	of~\cite{Jagadeesan2016}.
	For the medical match, mechanisms which allow for flexible salaries
	have been proposed~\cite[]{Crawford2008}, and it seems reasonable
	that doctors in these mechanisms would prefer working for a higher
	salary rather than a lower salary at the same hospital.
	
	We study conditions on choice functions of colleges under which a stable and (group)-strategy-proof matching mechanism exists for our model. Since we work with monotone preferences for students, sufficient conditions on choice functions of colleges are generally weaker. This is because
	certain preference manipulations are ruled out by our model. A
	student must report monotone preferences. Thus, a student cannot, e.g.,
	rank being admitted to a college without a stipend above being
	admitted to the same college with a stipend.\footnote{Similarly, weaker
		conditions are sufficient to guarantee the existence of stable
		allocations than those for markets with non-monotone
		preferences. To illustrate this point, consider a college
		admission problem of the following kind: There are two colleges
		$c_1,c_2$ and three students $s_1,s_2,s_3$. Suppose there are two
		kinds of contracts: A student can be admitted with stipend
		(represented by the contract term "$1$") or without stipend
		(represented by the contract term "$0$"). The colleges have choice
		functions induced (in the usual way) by the following preferences
		\begin{align*}\{(s_1,0),(s_2,0),(s_3,0)\}\succ_{c_1}\{(s_2,0)\}\succ_{c_1}\{(s_2,1)\}\succ_{c_1}\emptyset\succ_{c_1}\ldots\\
		\{(s_1,0)\}\succ_{c_2}\{(s_3,0)\}\succ_{c_2}\emptyset\succ_{c_2}\ldots\end{align*}
		Suppose student always prefer to be admitted at a college under a
		stipend to being admitted at the same college without a stipend.
		Going through all different cases, one can show that, for any
		preferences satisfying this monotonicity assumption, a stable
		allocation (in the matching with contracts sense) exists. This
		changes if students can report non-monotonic preferences. Consider
		the following preferences:
		\begin{align*}%\(f_1,2)\succ_{s_2}(f_2,2)\succ_{s_2}
		(c_1,1)\succ_{s_1}(c_2,1) \succ_{s_1}(c_1,0)\succ_{s_1}(c_2,0)\succ_{s_1}\emptyset\\ %(f_2,2)\succ_{s_3}(f_1,2)\succ_{s_3}
		(c_1,0)\succ_{s_2}(c_1,1)\succ_{s_2}\emptyset\succ_{s_2}\ldots\\
		(c_2,1)\succ_{s_3}(c_1,1)\succ_{s_3}(c_2,0)\succ_{s_3}(c_1,0)\succ_{s_3}\emptyset\end{align*}
		Student $s_2$ prefers to go to college $c_1$ without a stipend
		rather than a stipend. Thus, in a stable allocation it will never be
		the case that $s_2$ goes to college $c_1$ with a stipend, because
		otherwise $c_1$ and $s_2$ could block that allocation. This in turn
		implies that no stable allocation exists: The allocation that
		matches all three students to $c_1$ without stipend is blocked by
		student $s_3$ and college $c_2$. Any allocation that matches $s_1$
		to $c_2$ without stipend is blocked by students $s_1,s_2$ and $s_3$
		being admitted to college $c_1$ without stipend. Any allocation that
		matches $s_3$ to $c_2$ without stipend is blocked by student $s_1$
		being admitted to college $c_2$ without a stipend. Finally, all
		other allocations are either not individually rational or blocked by
		students $s_1,s_2$ and $s_3$ being admitted to college $c_1$ without
		a stipend.} Our main result is a characterization of a sufficient and necessary (in the maximal domain sense) condition for the existence of a stable and (group)-strategy-proof mechanism in terms of ``equivalence" of choice functions: Two choice functions are equivalent if the cumulative offer algorithm yields the same outcome under either choice function, for any preferences and choice functions for the other agents. If for each college an equivalent choice function exists under which contracts are substitutes and the law of aggregate demand holds, then the mechanism that assigns to each profile of preferences the outcome of the cumulative offer algorithm, or equivalently the outcome of the deferred acceptance algorithm, is group-strategy-proof, both under the equivalent choice profile and, by definition of equivalence, under the original choice profile.

	We proceed as follows: First we show (Theorems~\ref{smoothing} and~\ref{andersrum}) that an equivalent choice function under which contracts are substitutes and the law of aggregate demand holds exists, if and only if under the original choice function contracts are observable substitutes (in the sense of~\citealp{Hatfield2018}) and the observable law of aggregate demand holds.\footnote{Importantly, observability is meant with respect to offer sequences consistent with monotone preferences for students. Thus, the notions of observable substitutability and the observable law of aggregate demand are weaker than the corresponding notions for the general model with possibly non-monotone preferences.}  The result crucially relies on our monotonicity assumption on preferences: In independent work,~\cite{Jagadeesan2016} has considered the corresponding notion of equivalence between choice functions for the general preference domain. He shows that equivalence in that context implies unilateral substitutability which is a much more stringent restriction than observable substitutability. Even if the notion of equivalence is relaxed, by allowing, as in the construction of substitutable completions~\cite[]{Hatfield2015}, for non-feasible equivalent choice functions, the corresponding result does not hold on the general preference domain~\cite[]{Hatfield2018}.
	
	Our proof is constructive, by providing a natural construction of an equivalent choice function that we call a "virtual choice function", and by showing that the virtual choice function satisfies substitutability and the law of aggregate demand if and only if the original choice function satisfies observable substitutes and the observable law of aggregate demand. Importantly, the constructed virtual choice function is monotone in contract terms, even if the original choice function is non-monotone in contract terms.
	
	Theorems~\ref{smoothing} and~\ref{andersrum} have several important consequences that are subsequently spelled out: First, Theorem~\ref{smoothing} implies (Corollory~\ref{DAgood}) that the deferred acceptance mechanisms is group-strategy-proof if contracts are observable substitutes and the observable law of aggregate demand holds. Second, combining the theorems with a result of~\cite{Hatfield2018}, we obtain a maximal domain result (Corollary~\ref{equiv}): 
	The domain of choice profiles that are equivalent to a choice profile under which contracts are substitutes for colleges and the law of aggregate demand holds, form a
	maximal Cartesian and unital\footnote{A domain of choice profiles is unital if it contains all profile in which each college has a unit demand choice function, i.e.~it chooses at most one contract from each set.} domain for the existence of a stable
	and (group)-strategy-proof mechanism. Equivalently this domain is the domain of profiles under which contracts are observable substitutes and the observable law of aggregate demand holds. Our results are stronger than the result obtained by~\cite{Hatfield2018} for monotone preferences: For monotone
	preferences,  strategy-proofness of the deferred acceptance mechanism can be strengthened to group-strategy-proofness, and the axiom of ``non-manipulability via contractual terms" in the characterization of the maximal domain becomes redundant.
	 Furthermore, the notion of equivalence yields an easier and more interpretable condition on choice functions: In particular, group-strategy-proof of the deferred acceptance mechanism on the maximal domain now follows immediately from well-known previous results due to~\cite{HatfieldMilgrom2005} and \cite{HatfieldKojima2009} applied to the equivalent choice profile. Third, since an equivalent choice profile can be constructed and the equivalent choice profile is monotone in contract terms (Theorem~\ref{smoothing}), we obtain an
	embedding result (Corollary~\ref{embed}) in the sense of~\cite{Jagadeesan2016} for the class
	of choice functions under which contracts are observable substitutes
	and the observable law of aggregate demand holds. Thus, if attention
	is restricted to the case of monotone preferences for students, it
	is, in some sense without loss of generality to work with the model
	of matching with salaries of~\cite{KelsoCrawford1982} rather than
	the full matching with contracts model.

	\subsection{Related Literature}
	Stable many-to-one matching mechanisms and their incentive
	properties have been extensively
	studied~\cite[]{HatfieldKojima2010,Hatfield2015,Chenetal2014,KominersSonmez2015,
		HirataKasuya2015,Hatfield2018}. Most papers focus
	on the pure matching model or on the matching with contracts
	model. Working
	with monotone preferences makes our model less general than the full
	matching with contracts model. In particular, sufficient
	conditions for stability and the existence of a stable and
	(group)-strategy-proof mechanism from the literature on matching
	with contracts also apply to our model. However, conditions that are
	necessary for the general model with contracts are not necessary
	conditions for the model with monotone preferences. Thus, our results
	are independent of previous results for the matching with
	contracts model. 
	
	In recent related work,~\cite{AbizadaDur2017} consider a model of college admissions with stipends where
	complementarities in  contract terms are present for colleges: In
	their model three contract terms $\{t_+,t_0,t_-\}$ are available,
	interpreted as admission with stipend, admission without stipend but
	with tuition waiver, and admission without either of the two, and
	the number of $t_-$-contracts signed by a college constraints the
	number of $t_+$-contracts it signs. Importantly, students have
	monotone preferences in these contract terms. The model is a special
	case of ours. In particular, the result for ``Max-Min Responsive"
	preferences can be obtained as a special case of ours and for this
	case, their strategy-proofness result can be strengthened to
	group-strategy-proofness. However, \cite{AbizadaDur2017} also
	analyze pairwise-stable outcomes and this part of their analysis
	does not have a counter-part in our paper. 
	
	Our original working paper~\cite[]{Schlegel2016} contained a version
	of our maximal domain result for a model of matching with contracts
	where also colleges' choice functions are monotone in
	contract-terms. Technically the two maximal domain results are
	independent. However, the adaption to obtain the previous version of
	the theorem from the current one are minimal.\footnote{One has to
		make sure that in the ``necessity part" of the maximal domain proof,
		the profile of unit demand choice functions for the other colleges can
		be chosen to be induced by monotone preferences.} The current version of the maximal domain result can also be obtained with a similar proof as the one in the previous version.  For the original version of the theorem, we refer
	the interested reader to the original version of the working paper.
	
	After, our original working paper~\cite[]{Schlegel2016},
	\cite{Hatfield2018} released a new version of their paper,
	where the authors extend their analysis to the case of restricted preference domains for the applicant side and show that their analysis of observable substitutes
	and the observable law of aggregate demand extends to restricted
	preference domains if the notion of ``observability" is adjusted to the domain restriction. Their additional
	work allows us to shorten the proof of our maximal domain result, since we can now make use of their Theorem~5. As remarked previously, our results in the current paper for
	monotone preferences strengthen the results
	in~\cite{Hatfield2018} by strengthening strategy-proofness to group-strategy-proofness.
	\section{Model and Known Results}\label{Model0}
	\subsection{Model}
	There are two finite disjoint sets of agents, a set of {\bf colleges
	}$C$ and a set of {\bf students} $S$. There is a finite set of
	possible {\bf contract-terms} $ T$ which are totally ordered by $\vartriangleright$. Contract terms could, for example, be different amounts of stipends that a college can pay a student, ordered from high to low stipends; in a labor market, contract terms could be different possible wages ordered from high to low, or working hours from short to long, etc.\footnote{More generally, we can deal with the case where the ordering is college-student pair specific, i.e.~for each college $c\in C$ and student $s\in S$ there is a total ordering $\vartriangleright_{c,s}$ over contract terms. If "Pareto separability"~\cite[]{Roth1984b} of contracts holds, the set $T$ can be interpreted as the set of "generalized salaries" in the sense of~\cite{Roth1984b} and $\vartriangleright_{c,s}$ is derived from a parameterization of the efficient contract frontier. Thus, also in this case, our results apply.} Colleges can accept students under
	different bilateral contracts. The set of possible {\bf contracts}
	is ${ X}\subseteq C\times S\times  T$. For a contract $x\in  X$, we
	denote by $x_C\in C$ the college involved in $x$, by $x_S\in S$ the
	student involved in $ X$, and by $x_ T\in  T$ the contract term
	involved in $x$. We write $x_{ T}\unrhd x_{ T}'$ whenever $x_{ T}\vartriangleright x_{ T}'$ or $x=x'$.
	For $Y\subseteq X$ and college $c\in C$ we denote by $Y_c=\{x\in Y: x_C=c\}$ the set of contracts in Y involving $c$, and similarly for a student $s\in S$ we denote by $Y_s:=\{x\in Y:y_S=s\}$ the set of contracts in $Y$ involving $s$.
	
	Each college $c$ has a choice function
	$Ch_c:2^{ X_c}\rightarrow 2^{ X_c}$ that from each set  $Y\subseteq
	X_c$ chooses a subset of contracts. Each college can only sign one
	contract with any given student, i.e.~for each $x,y\in Ch_c(Y)$ with
	$x\neq y$ we have $x_S\neq y_S$. Throughout this paper, we assume that all considered choice functions satisfy the {\bf
		irrelevance of rejected contracts (IRC)}~\cite[]{AygunSonmez2012a}, which means that for all $Y\subseteq  X_c$, $x\in  X_c\setminus Y$, $$x\notin
	Ch_c(Y\cup\{x\})\Rightarrow Ch_c(Y)=Ch_c(Y\cup\{x\}).$$ We also define
	a rejection function $R_c:2^{ X_c}\rightarrow2^{ X_c}$ by
	$R_c(Y):=Y\setminus Ch_c(Y).$ We denote the set of all choice functions for college $c\in C$ that satisfy IRC, by $\mathcal{C}_c$.
	
	% We define the
	%{\bf rejection function} $R_c:2^ X\rightarrow2^ X$ by
	%$R_c(Y):= X\setminus Ch_c(Y)$.
	Each student $s$ has preferences
	$\succeq_s$ over different contracts involving him, and an outside
	option which we denote by ``$\emptyset$". We make the following
	assumption on students' preferences:
	\begin{enumerate}
		\item Preferences are {\bf strict}, for $x,x'\in  X$ with $x'_S=x_S$, we have $$x\neq x'\Rightarrow x\succ_{x_S}x'\text{ or }x'\succ_{x_S}x,$$ and $$x\succ_{x_S}\emptyset\text{  or  }\emptyset\succ_{x_S}x.$$
		
		\item[2.] Preferences are {\bf monotone} in contract terms, for each $x,x'\in  X$ with $x_S=x'_S$ and $x_C=x_C'$ we have
		$$x_ T\vartriangleright x'_ T\Rightarrow x\succ_{x_S}x'.$$
	\end{enumerate}
	We denote the set of all strict and monotone preferences for student $s\in S$ by $\mathcal{R}_s$.
	A {\bf problem} is a pair
	$(Ch,\succeq)$ consisting of a {\bf choice profile} $Ch=(Ch_c)_{c\in
		C}\in\bigtimes_{c\in C}\mathcal{C}_c$ and a {\bf preference profile} $\succeq=(\succeq_s)_{s\in S}\in\bigtimes_{s\in S}\mathcal{R}_s$.

	An {\bf allocation} is a set $Y\subseteq  X$ that contains at most
	one contract per student. We denote the set of allocations by
	$\mathcal{A}$. For $Y\in\mathcal{A}$, we write $Y_S:=\{y_S:y\in Y\}$. An allocation $Y$ is
	\begin{itemize}
		\item[] {\bf individually rational} in $(Ch,\succeq)$ if for each $c\in C$, we have $Y_c= Ch_c(Y)$, and for each $y\in Y$ we have $y\succ_{y_{ S}}\emptyset,$
		\item[] {\bf blocked} in $(Ch,\succeq)$ if there are $c\in C$ and an allocation $Z$ with $Z_c\neq Y_c$, such that $Z_c= Ch_c(Y\cup Z)$ and for each $z\in Z_c$ we have $z\succeq_{z_{S}}Y_{z_{S}}$,
		\item[] {\bf stable} in $(Ch,\succeq)$ if it is individually rational and not blocked.
	\end{itemize}
 In the following, it will be useful to define for each
	set of contracts $Y\subseteq  X_c$ with a college $c$ the allocation
	$$Y^{\min}:=\{y\in Y:\nexists y'\in Y, y_S'=y_S,y_
	T\vartriangleright y'_ T\}$$ of contracts that gives each student
	the worst contract among the contract in $Y$, and the set
	$$\mathcal{U}(Y):=\{x\in  X_c:x_{ T}\unrhd y_ T\text{ for
		some }y\in Y\text{ with }x_S=y_S\}$$ of contracts,
	not necessarily in $Y$, which are as least as good for the involved student as his worst contract in $Y$.\footnote{Note that the definitions of $Y^{\min}$ and $\mathcal{U}(Y)$ only depend on $\vartriangleright$.} 
	We call choice function 
	$Ch_c$ for college $c\in C$ {\bf monotone in contract-terms} if for each $Y\subseteq
	X_c$ we have
	$$Ch_c(Y)\subseteq Y^{\min}.$$
	
	\subsection{Mechanisms}
	A {\bf mechanism} (for the students) is a mapping from preference
	profiles to allocations
	$\mathcal{M}:\bigtimes_{s\in S}\mathcal{R}_s\rightarrow\mathcal{A}$. Mechanism
	$\mathcal{M}$ is {\bf strategy-proof} if it is a weakly dominant
	strategy for students to report their true preferences to the
	mechanism, i.e.~for each $s\in S$,
	$\succeq_{-s}\in\bigtimes_{s'\in S\setminus\{s\}}\mathcal{R}_{s'}$ and
	$\succeq_s,\succeq'_s\in\mathcal{R}_s$ we have
	$$\mathcal{M}(\succeq_{s},\succeq_{-s})\succeq_s\mathcal{M}(\succeq'_s,\succeq_{-s}).$$
	Mechanism $\mathcal{M}$ is {\bf group-strategy-proof} if for each
	$S'\subseteq S$, $\succeq_{-S'}\in\bigtimes_{s\in S\setminus\mathcal{S'}}\mathcal{R}_s$ and
	$\succeq_{S'},\succeq'_{S'}\in\bigtimes_{s\in S'}\mathcal{R}_s$, there is a $s'\in
	S'$ with
	$$\mathcal{M}(\succeq_{S'},\succeq_{-S'})\succeq_{s'}\mathcal{M}(\succeq'_{S'},\succeq_{-S'}).$$
	Let $Ch$ be a choice profile. Mechanism $\mathcal{M}$ is {\bf
		$Ch$-stable} if for each $\succeq\in\bigtimes_{s\in S}\mathcal{R}_s$ allocation
	$\mathcal{M}(\succeq)$ is stable in $(Ch,\succeq)$.
	\subsection{Examples}
	Several examples from applied marked design fit into our model, such as the model of cadet-to-branch matching of~\cite{Sonmez2013} with the preference restriction of~\cite{Jagadeesan2016}, the discrete version of the job matching model of~\cite{KelsoCrawford1982}, the college admission models of~\cite{Hassidim2017} and of~\cite{AbizadaDur2017}.  The job matching model of~\cite{KelsoCrawford1982} will play an important role in Section~\ref{sec:embed}. 
	\subsubsection{Job matching with salaries}
	A finite {\bf Kelso-Crawford economy} consists of a finite set of
	firms $F$, a finite set of workers $W$, a finite set of salaries
	$\Sigma\subseteq \mathbb{R}_{+}$ and a profile $u_{i\in F\cup W}$
	of utility functions, where for each $f\in F$, utility is a functions
	$u_f:\{(W',p):W'\subseteq W,p\in\mathbb{R}^{W'}_{+}\}\rightarrow\mathbb{R}$ that is strictly decreasing in $\mathbb{R}_+^{W'}$ for each $W'$, and for each
	$w\in W$, utility is a function $u_w:F\times\mathbb{R}_{+}\cup\{\emptyset\}\rightarrow\mathbb{R}$ that is strictly increasing in $\mathbb{R}_+$ for each $f$.
	
	The model fits in our framework with $C=F$, $S=W$, $T=\Sigma$,
	$x_T\vartriangleright x_T'\Leftrightarrow x_T>x_T'$, $X=F\times
	W\times \Sigma$, choice functions are defined by
	$$Ch_c(Y)=\max_{Y'\subseteq Y^{\min}} u_c(Y'_{S},(y_T)_{y\in Y'}),$$
	and preferences $(\succeq_s)_{s\in S}$ are induced by utility
	functions, $$(c,s,t)\succ_s(c',s,t')\Leftrightarrow u_s(c,t)>u_s(c',t'),\quad (c,s,t)\succ_s\emptyset\Leftrightarrow u_s(c,t)>u_s(\emptyset).$$
	Note that the constructed market with contracts $(Ch,\succeq)$ does
	not only satisfy monotonicity of students' preferences, but also
	monotonicity of colleges' choice function.
	We say that workers are {\bf gross substitutes} for firm $f$ if for each $p,p'\in\Sigma^W$ such that $p\leq p'$, if $p_w'=p_w$ and $w\in\text{argmax}_{W'\subseteq W}u_f(W',p)$, then $w\in\text{argmax}_{W'\subseteq W}u_f(W',p')$.

	\subsection{Stable Allocations}\label{exist}
	In general, a stable allocation does not need to exist for our
	model. A sufficient condition for stability is that contracts are
	substitutes for colleges, i.e.~if a contract is rejected from some set
	of contracts, then this contract is also rejected from each superset of
	that set of contracts.
	\newline\newline
	\noindent {\bf Substitutability}
	\citep{Roth1984b,HatfieldMilgrom2005}: For each $Z\subseteq Y\subseteq  X_c$, $$R_c(Z)\subseteq R_c(Y).$$
	
	Not only is substitutability sufficient for the existence of a
	stable allocation but it also guarantees that the set of stable
	allocations has a lattice structure. If contracts are substitutes
	for colleges, then the set of stable allocation forms a lattice with
	respect to the preferences of students~\cite[]{Blair1988}. In
	particular, there is a unique stable allocation that is most
	preferred by all students among all stable allocations. We call this
	allocation the {\bf student-optimal stable} allocation.  It can be
	found by the {\bf cumulative offer (CO) algorithm} that is
	defined as follows.
	\begin{enumerate}
		\item Each student applies under his favorite acceptable and unrejected contract or stays alone if he finds no unrejected contract acceptable.
		\item Each college tentatively accepts the contracts it chooses among the applications it has received so far and rejects all other contracts.
		\item If no application has been rejected in the current round, the algorithm stops. If some applications are rejected we repeat.
	\end{enumerate}
	We denote the set of contracts chosen by the colleges in the final round of the CO algorithm for problem $(Ch,\succeq)$ by $\mathcal{CO}(Ch,\succeq)$. Alternatively, we can use the deferred acceptance algorithm to find the student optimal stable allocation. In the {\bf deferred acceptance (DA) algorithm} rejections are permanent: Thus, in 2., colleges only choose among contracts that are proposed in the current round of the algorithm, but not contracts that have been proposed and rejected in a previous round. If contracts are substitutes for colleges, then the deferred acceptance and the cumulative offer algorithm yield the same outcome. The {\bf deferred acceptance mechanism} for $Ch$ assigns to each $\succeq\in\bigtimes_{s\in S}\mathcal{R}_s$ the outcome of the deferred acceptance algorithm in $(Ch,\succeq)$.\footnote{We use the deferred acceptance algorithm and not the cumulative offer algorithm to define a mechanism, because it yields a well-defined mechanism even in situations where contracts are not substitutes for colleges. The deferred acceptance algorithm always ends in a (potentially unstable) allocation where each student is accepted at at most one college, whereas the cumulative offer algorithm can yield an outcome where multiple colleges accept a student.}

	Student-optimality is related to group-strategy-proofness. Under
	substitutability and the following additional condition on the
	colleges' choice functions the deferred acceptance mechanism is
	group-strategy-proof.
	\newline

	\noindent
	{\bf Law of Aggregate Demand} \citep{HatfieldMilgrom2005}:  For each $Z\subseteq Y\subseteq  X_c$: $$Z\subseteq Y\Rightarrow |Ch_c(Y)|\geq|Ch_c(Z)|.$$
	
	The following proposition summarizes known results about
	side-optimal stable allocations, the invariance of the set of
	matched students in stable allocations (the rural hospitals
	theorem), and group-strategy-proofness of the deferred acceptance mechanism.
	\begin{proposition}[\citealp{KelsoCrawford1982,Blair1988,Fleiner2003,HatfieldMilgrom2005,HatfieldKojima2009}]\label{SufficientWeak1}
		\noindent
		\begin{enumerate}
			\item  If contracts are substitutes for colleges, then the deferred acceptance and the cumulative offer algorithm are equivalent and converge to a stable allocation that is most preferred by all students among all stable allocations.
			\item If choice functions satisfy, moreover, the law of aggregate demand, then \begin{enumerate}
				\item the set of accepted students is the same in all stable allocations and each college accepts the same number of students in each stable allocation,
				\item the deferred acceptance mechanism is group-strategy-proof.
			\end{enumerate}
		\end{enumerate}
	\end{proposition}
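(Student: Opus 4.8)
Since this proposition assembles results that are by now classical, my plan is to reprove them with the standard deferred-acceptance arguments, following \cite{HatfieldMilgrom2005} and \cite{HatfieldKojima2009}; the monotonicity restriction on students' preferences plays no role in any of these arguments, so nothing has to be adapted from the general matching-with-contracts model. Throughout I write $Y^{S}:=\mathcal{DA}(Ch,\succeq)$.

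For part 1, I would first record the equivalent reformulation of substitutability as monotonicity of the rejection function: $Y\subseteq Y'$, $x\in Y$ and $x\in R_c(Y)$ imply $x\in R_c(Y')$. From this it follows that the set of contracts rejected at least once is nondecreasing across rounds, so the algorithm stops after finitely many rounds. Individual rationality of $Y^{S}$ is built into the algorithm (colleges choose from the proposed contracts, students propose only acceptable contracts, and IRC yields $Y^{S}_{c}=Ch_c(Y^{S}_{c})$). For no-blocking: if $(c,Z)$ were a block, every $z\in Z_c$ would be weakly --- hence, by strictness, equally or strictly --- preferred by $z_S$ to $Y^{S}_{z_{S}}$, so $c$ was offered and rejected every such $z$ that it does not already hold; monotonicity of $R_c$ together with IRC then contradicts $Z_c=Ch_c(Y^{S}\cup Z)$. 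Finally, student-optimality follows from the classical induction on rounds: no student is ever rejected by a college with which he holds a weakly preferred contract in some stable allocation $Y$ --- were round $k$ the first violation, the college involved would, by the induction hypothesis and substitutability, choose out of $Y$ together with its current holdings a set blocking $Y$. The lattice statement is \cite{Blair1988}.

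For part 2(a), I would work with the student-welfare lattice of \cite{Blair1988}: any two stable allocations $Y,Y'$ have a least upper bound $Y\vee Y'$ and a greatest lower bound $Y\wedge Y'$, both stable, for the order ``every student weakly prefers''. The plan is to establish, from substitutability, a counting lemma relating $|Ch_c(Y)|$, $|Ch_c(Y')|$, $|Ch_c(Y\vee Y')|$ and $|Ch_c(Y\wedge Y')|$, and then to invoke the law of aggregate demand to conclude that $|Ch_c|$ in fact takes the same value on all four; hence $|Ch_c(Y)|=|Ch_c(Y')|$ for every college $c$, and summing over $c$ together with strictness of preferences gives that the set of matched students is invariant across stable allocations. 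Getting the counting lemma and its interaction with the law of aggregate demand exactly right --- which is the content of the rural-hospitals theorem --- is the one genuinely technical step in this part.

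For part 2(b), strategy-proofness comes first, via the ``drop strategies'' lemma of \cite{HatfieldMilgrom2005}: any profitable unilateral misreport can be reproduced by a report that merely truncates the true preference, and a truncation cannot improve the reporting student's assignment at the student-optimal stable allocation (a rejection-chain argument using parts 1 and 2(a)). Group-strategy-proofness is the theorem of \cite{HatfieldKojima2009}: if a coalition $S'$ could misreport so that every member is weakly better off, part 2(a) forces the set of matched students to be unchanged, so each member of $S'$ keeps an assignment; one then selects a member whose assignment strictly improves and runs a blocking argument back in the original economy to reach a contradiction. I expect this last step to be the main obstacle: as noted in the introduction, strategy-proofness does not on its own yield group-strategy-proofness on this domain, so the argument has to genuinely track the joint deviation rather than reduce to a sequence of single-agent manipulations.
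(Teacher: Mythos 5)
Your plan is fine, but note that the paper does not prove this proposition at all: it is stated as a package of known results and discharged by citation to \cite{KelsoCrawford1982}, \cite{Blair1988}, \cite{HatfieldMilgrom2005} and \cite{HatfieldKojima2009}, and your reconstruction follows exactly those sources (rejection monotonicity and the ``no stable contract is ever rejected'' induction for part 1, the law-of-aggregate-demand counting argument for the rural hospitals statement, truncation strategies plus the coalition argument of Hatfield--Kojima for 2(b)). The one thing that genuinely needs saying in this paper's setting you do say, namely why the full-domain results transfer: stability is a pointwise property of each profile, and restricting reports to the monotone domain $\bigtimes_s\mathcal{R}_s$ only removes potential deviations, so (group-)strategy-proofness on the unrestricted strict domain is inherited (and truncations of monotone preferences are again monotone, so your truncation step stays inside the domain). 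Two caveats. First, your treatments of 2(a) and 2(b) are promissory notes rather than proofs: the ``counting lemma'' relating $|Ch_c|$ at $Y$, $Y'$, $Y\vee Y'$, $Y\wedge Y'$ and the coalition argument are precisely the technical content of Theorem~8 of \cite{HatfieldMilgrom2005} and of \cite{HatfieldKojima2009}; if you intend a self-contained proof you must fill them in, whereas if you lean on those papers you are simply doing what the paper already does by citation. Second, your worry in the last paragraph is misplaced: since group-strategy-proofness already holds on the unrestricted domain by \cite{HatfieldKojima2009}, the restricted-domain statement follows immediately, with no extra work to ``track the joint deviation''; the paper's remark about \cite{Barbera2016} only says that one cannot deduce group-strategy-proofness abstractly from strategy-proofness on this domain, not that the classical group-strategy-proofness proof needs adaptation here.
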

	\section{Results}
	\subsection{Equivalent Choice Functions}\label{virtual}
	It is a natural question whether the conditions of
	Section~\ref{exist} for the stability and group-strategy-proofness
	of the deferred acceptance mechanism are also necessary. Next we
	provide a counterexample that shows that substitutability and the law
	of aggregate demand are not necessary for the deferred-acceptance
	mechanism to be stable and group-strategy-proof. The example will have the following structure: There is one college $c$
	for which contracts are not substitutes. For each college except for
	$c$, contracts are substitutes and the law of aggregate demand
	holds. However, $c$'s choice function can be replaced by another
	choice function, such that \begin{enumerate}
		\item under the replacing choice function, contracts are
		substitutes for $c$ and the law of aggregate demand
		holds,
		\item the outcome of
		the cumulative offer algorithm (which here coincides with the outcome of the deferred acceptance algorithm) is, for all monotone preference profiles for students, the same under the original
		choice profile and the profile where $c$'s choice function is
		replaced.\end{enumerate} As the deferred acceptance mechanism is group-strategy-proof for the profile
	where we have replaced $c$'s choice function by the other choice
	function, it is by outcome-equivalence also group-strategy-proof for the original choice profile.
	\begin{example}\label{virt}
		The original choice function is from Example~2 in~\cite{Hassidim2017}. Consider a college admission problem with stipends where $T=\{0,1\}$, with the interpretation that contract term "$0$" corresponds to admission without stipend, and contract term $"1"$ corresponds to admission with a stipend. Thus, $1\vartriangleright0$. For simplicity, we assume that there are two students $S=\{s_1,s_2\}$. The example can be generalized to more students.
		Consider a college $c_1$ which can admit up to two students but only has one stipend available. Students are admitted based on merit, and student $s_1$ is the higher merit student than $s_2$. If both students are available, the college prefers to allocate the stipend to student $s_2$ rather than $s_1$, e.g. because $s_2$ comes from a lower income background. The choice function of the college $Ch_{c_1}$ is induced by preferences 
		\begin{align*}\{(c_1,s_1,0),(c_1,s_2,0)\}\succ_{c_1}\{(c_1,s_1,0),(c_1,s_2,1)\}\succ_{c_1}\{(c_1,s_1,1),(c_1,s_2,0)\}\\\succ_{c_1}\{(c_1,s_1,0)\}\succ_{c_1}\{(c_1,s_1,1)\}\succ_{c_1}\{(c_	,s_2,0)\}\succ_{c_1}\{(c_1,s_2,1)\}
		\end{align*}
		in the usual way. 
		Now consider the alternative choice function $Ch_{c_1}'$ induced by preferences:
		\begin{align*}\{(c_1,s_1,0),(c_1,s_2,0)\}\succ_{c_1}'\{(c_1,s_1,1),(c_1,s_2,0)\}\succ_{c_1}'\{(c_1,s_1,0)\}\\\succ'_{c_1}\{(c_1,s_1,1)\}\succ_{c_1}'\{(c_1,s_2,0)\}\succ_{c_1}'\{(c_1,s_2,1)\}
		\end{align*}
		Thus, the preferences differ only in so far as allocation $\{(c_1,s_1,0),(c_1,s_2,1)\}$ where $s_1$ is admitted without stipend and $s_2$ is admitted with stipend is unacceptable under $\succeq_{c_1}'$. The ranking of all other allocations is the same under both preferences.
		
		Note that under $Ch_{c_1}$ contracts are not substitutes as  $$(c_1,s_2,1)\in
		Ch_{c_1}(\{(c_1,s_1,0),(c_1,s_1,1),(c_1,s_2,1)\})=\{(c_1,s_1,0),(c_1,s_2,1)\}$$ but $$(c_1,s_2,1)\notin Ch_{c_1}(\{(c_1,s_1,1),(c_1,s_2,1)\})=\{(c_1,s_1,1)\},$$ and that
		under $Ch_{c_1}'$ contracts are substitutes and the law of aggregate demand holds.
		
		Suppose colleges $C\setminus \{c_1\}$ have choice functions
		$Ch_{-c_1}=(Ch_{c})_{c\neq c_1}$ under which contracts are substitutes
		and the law of aggregate demand holds. Define $Ch:=(Ch_{c_1},Ch_{-c_1})$ and
		$Ch':=(Ch_{c_1}',Ch_{-c_1})$. Let $\succeq\in\bigtimes_{s\in S}\mathcal{R}_s$. We show that the cumulative offer
		algorithm  converges to the
		same allocation for the problems $(Ch,\succeq)$ and  $(Ch',\succeq)$. Observe that for all sets $Y\subseteq X$ with 
		$Ch_{c_1} (Y)\neq Ch_{c_1}'(Y)$ we have $\{(c_1,s_1,0),(c_1,s_2,1)\}\subseteq Y$. In
		particular, for the cumulative offer algorithm to differ for the
		two problems, student $s_1 $ must apply to $c_1$ for admission without stipend
		during the cumulative offer algorithm in $(Ch,\succeq)$. Note
		however that before $s_1$ applies to $c_1$ for admission without stipend, he
		applies to $c_1$ for admission with stipend, as $1\vartriangleright0$.
		But once college $c_1$ tentatively accepts $s_1$ with a stipend,
		the college will not subsequently drop student $s_1$, as $s_1$ has higher merit than $s_2$. Thus, $s_1$ will never apply to $c_1$ for admission without a stipend during the cumulative offer algorithm. Hence, the cumulative offer algorithm for the two problems converges to the same allocation, which
		is the student-optimal stable allocation in $(Ch',\succeq)$.
		
		Note, moreover, that the equivalent choice function $Ch'_c$ prioritizes merit over need in the following sense: From each set of contracts, an allocation is chosen, where the stipend is either un-allocated or allocated to the higher merit student $s_1$; whereas under choice function, $Ch_c$ the stipend would be allocated to student $s_2$, if both students are available and it is possible to admit student $s_1$ without a stipend.
		\qed
	\end{example}
	The example motivates the following definition.\footnote{A similar
		notion has been introduced independently by~\cite{Jagadeesan2016}.
		However, since \cite{Jagadeesan2016} does not assume monotonicity on
		students' preferences, his notion of equivalence is much stronger.
		See our Remark~1.}
	\newline\newline
	\noindent
	{\bf CO-equivalence}:
	Two choice profiles $Ch$ and $Ch'$ are
	CO-equivalent if for each  $\succeq\in\bigtimes_{s\in S}\mathcal{R}_s$ we have $\mathcal{CO}(Ch,\succeq)=\mathcal{CO}(Ch',\succeq)$.
	Two choice functions $Ch_c$ and $Ch_c'$ are CO-equivalent if for each $Ch_{-c}\in\bigtimes_{c'\in C\setminus\{c\}}\mathcal{C}_{c'}$ profiles $Ch$ and $Ch':=(Ch'_c,Ch_{-c})$ are CO-equivalent.
	\newline
	
	In the following, we use the term "equivalent" instead of "CO-equivalent", since we only consider one notion of equivalence throughout the paper. 
	From the definition and Proposition~\ref{SufficientWeak1}, we obtain the following result:
	\begin{proposition}\label{GSP}
		If $Ch$ is equivalent to a profile under which contracts are
		substitutes and the law of aggregate demand holds, then the deferred
		acceptance mechanism for $Ch$ is $Ch$-stable and group-strategy-proof.
	\end{proposition}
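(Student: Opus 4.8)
The plan is to combine the definition of DA-equivalence with part~2(b) of Proposition~\ref{SufficientWeak1} in essentially one step. Suppose $Ch$ is DA-equivalent to a choice profile $Ch'$ under which contracts are substitutes and the law of aggregate demand holds. By Proposition~\ref{SufficientWeak1}, the deferred acceptance mechanism for $Ch'$ — the map $\succeq\mapsto\mathcal{DA}(Ch',\succeq)$ — is group-strategy-proof.

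Next I would observe that DA-equivalence says precisely that $\mathcal{DA}(Ch,\succeq)=\mathcal{DA}(Ch',\succeq)$ for every $\succeq\in\bigtimes_{s\in S}\mathcal{R}_s$, so the deferred acceptance mechanism for $Ch$ and the deferred acceptance mechanism for $Ch'$ are literally the same mapping from preference profiles to allocations. Group-strategy-proofness is a property of this mapping alone: for each coalition $S'\subseteq S$, each $\succeq_{-S'}$, and each pair $\succeq_{S'},\succeq'_{S'}$, there must be some $s'\in S'$ with $\mathcal{M}(\succeq_{S'},\succeq_{-S'})\succeq_{s'}\mathcal{M}(\succeq'_{S'},\succeq_{-S'})$. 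Since this holds for the mechanism induced by $Ch'$ and that mechanism coincides with the one induced by $Ch$ at every preference profile, the same conclusion holds for the mechanism for $Ch$, which is exactly the claim.

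There is no real obstacle here; the content is entirely contained in Proposition~\ref{SufficientWeak1}, and DA-equivalence is defined so as to preserve the mechanism. The only point to keep straight is the quantifier: DA-equivalence is required to hold for \emph{all} preference profiles in the domain $\bigtimes_{s\in S}\mathcal{R}_s$, which it does by definition, so the two mechanisms agree on their entire common domain and the transfer of group-strategy-proofness is unconditional.
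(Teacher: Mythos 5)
Your proof is correct and matches the paper's (implicit) argument exactly: the paper states the proposition as an immediate consequence of the definition of DA-equivalence together with Proposition~\ref{SufficientWeak1}, which is precisely your one-step transfer of group-strategy-proofness from the mechanism for $Ch'$ to the identical mapping induced by $Ch$. Your remark that DA-equivalence and group-strategy-proofness are both quantified over the same domain $\bigtimes_{s\in S}\mathcal{R}_s$ correctly handles the only point requiring care.
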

	\begin{proof}
		Let $Ch'$ be the equivalent profile. Let $\succeq\in\bigtimes_{s\in S}\mathcal{R}_s$.
		Let $Y_1\subseteq Y_2\subseteq \ldots\subseteq Y_{\tau}$ be the set of accumulated offers during the cumulative offer algorithm for $(Ch',\succeq)$ and $Z_1,Z_2,\ldots, Z_\tau$ be the set of proposed contracts during the deferred acceptance algorithm for $(Ch',\succeq)$. Since contracts under $Ch'$ are substitutes, Part~(1) of  Proposition~\ref{SufficientWeak1} implies $Ch'_c(Y_t)=Ch'_c(Z_t)$ for $t=1,\ldots,\tau$ and each $c\in C$. Since $Ch$ and $Ch'$ are equivalent we have $Ch_c(Y_t)=Ch'_c(Y_t)$ for $t=1,\ldots,\tau$ and each $c\in C$. Thus, $Ch_c(Y_t)=Ch'_c(Y_t)=Ch'_c(Z_t)\subseteq Z_t$ for $t=1,\ldots,\tau$ and each $c\in C$. By IRC for $Ch$ this implies $Ch_c(Z_t)=Ch_c(Y_t)=Ch'_c(Z_t)$ for $t=1,\ldots,\tau$ and each $c\in C$, and the deferred acceptance algorithm in $(Ch,\succeq)$ and in $(Ch',\succeq)$ yield the same outcome. In particular, this implies the stability in $(Ch,\succeq)$ of the outcome $\bigcup_{c\in C}Ch_c(Z_{\tau})$ of the deferred acceptance algorithm, since otherwise, if college $c\in C$ blocks, we would have $Ch_c(Y_{\tau})\nsubseteq Z_{\tau}.$ Moreover, if under $Ch'$ contracts are substitutes and the law of aggregate demand holds, this implies by Part 2(b) of Proposition~\ref{SufficientWeak1} the group-strategy-proofness of the deferred acceptance mechanism for $Ch$.
	\end{proof}
	As our first main result, we show that for each choice function under which contracts are observable substitutes and the observable law of aggregate demand holds
	we can construct an equivalent choice function under which contracts are substitutes and the law of aggregate demand holds. As an immediate corollary, we will obtain the group-strategy-proofness of the deferred acceptance mechanism under observable substitutes and the observable law of aggregate demand.
	Importantly, we define observable substitutability and the observable law of aggregate demand for monotone preferences by only requiring that the properties hold for observable sequences that can occur under monotone preferences:
	In the following, a sequence of contracts $x^1,x^2,\ldots,x^{\tau}$ is {\bf generated
		from monotone preferences} if for $1\leq t\leq
	\tau$ and each $x\in  X$ with $x_S=x_S^t$ and
	$x_ T\vartriangleright x_ T^t$, we have $x\in\{x^1,\ldots,x^{t-1}\}$. A
	sequence $x^1,x^2,\ldots,x^{\tau}$ is {\bf observable} under $Ch_c$ if for $1\leq t\leq
	\tau-1$ we have $x^{t+1}_S\notin [Ch_c\{x^1,\ldots,x^{t}\}]_S$. We call a set of contracts $Y\subseteq  X_c$ {\bf
		observable} under $Ch_c$, if there is a sequence of contracts
	$x^1,\ldots,x^{|Y|}$ that is observable under $Ch_c$ and that is generated
	from monotone preferences such that $Y=\{x^1,\ldots,x^{|Y|}\}.$ We denote the set of all observable sets of contracts under $Ch_c$ by $\mathcal{O}(Ch_c)$. Observable substitutability for monotone preferences is defined as follows:\footnote{Alternatively, we could also define observable substitutabilty for monotone preferences, by requiring that the substitutes condition holds for observable sets, i.e.~by requiring that for $Y,Z\in\mathcal{O}(Ch_c)$ and $Z\subseteq Y$ we have $R_c(Z)\subseteq R_c(Y)$. See the first part of Lemma~\ref{order2} in the appendix. Note that this definition is only equivalent for the case of monotone preferences, but is stronger on the general preference domain.}
	\newline\newline
	\noindent
	{\bf Observable Substitutability (for Monotone Preferences)}~\cite[]{Hatfield2018}:
	Contracts are {observably substitutes
		for monotone preferences} under choice function $Ch_c$ if for each observable sequence
	$x^1,x^2,\ldots,x^{\tau}$ under $Ch_c$ that is generated from
	monotone preferences, we have
	$$R_c\{x^1,\ldots x^{t}\}\subseteq R_c\{x^1,\ldots,x^{t+1}\}\text{ for each }1\leq t<\tau.$$
	Similarly, we can define an observable version of the law of aggregate demand.
	\newline\newline
	\noindent
	{\bf Observable Law of
		Aggregate Demand (for Monotone Preferences)}~\cite[]{Hatfield2018}:
	A choice function $Ch_c$ satisfies the {observable law of
		aggregate demand for monotone preferences} if for each observable
	sequence under $Ch_c$ that is generated from monotone preferences, we
	have
	$$|Ch_c\{x^1,\ldots x^{t}\}|\leq|Ch_c\{x^1,\ldots,x^{t+1}\}|\text{ for each }1\leq t<\tau.$$
	
	Since we exclusively deal with monotone preferences, from now one we
	drop the term "for monotone preferences." However, the reader should be aware that the conditions are weaker than the corresponding conditions for general preferences.\footnote{In the example in Footnote~6, for example, college $c_1$'s choice function satisfies the monotone versions of the two conditions but not the general preference versions.} With the definition, we have the following result.	
	\begin{theorem}\label{smoothing}
		For each choice function $Ch_c$ under which contracts are observable substitutes and the observable law of aggregate demand holds, there exists an equivalent  choice function $Ch'_c$ that is monotone in contract terms such that contracts are substitutes and the law of aggregate demand holds.
	\end{theorem}
	To prove the result, we define for each choice function $Ch_c$ a related choice function $Ch'_c$ that we call the
	{\bf virtual choice function} for $Ch_c$ by $$Ch'_c(Y):=Y^{\min}\setminus\bigcup_{\tilde{Y}\subseteq \mathcal{U}(Y),\tilde{Y}\in\mathcal{O}(Ch_c)}R_c(\tilde{Y}).$$
	The construction has the following intuition: The set of accumulated offers during an instance of the cumulative offer algorithm is observable. Hence a sufficient condition for equivalence between two choice functions $Ch_c$ and $Ch_c'$ is that $Ch_c(Y)=Ch'_c(Y)$ and therefore $R_c(Y)=R_c'(Y)$ for each $Y\in\mathcal{O}(Ch_c)$.  If contracts are substitutes under $Ch'_c$, then, in particular, contracts that are rejected from an observable (under $Ch_c$) set are also rejected from each, potentially unobservable, superset of that set. Hence assuming that choices agree on observable sets and contracts are substitutes under $Ch_c'$ for each $Y\subseteq X_c$ we have  $\bigcup_{\tilde{Y}\subseteq Y,\tilde{Y}\in\mathcal{O}(Ch_c)} R_c(\tilde{Y})=\bigcup_{\tilde{Y}\subseteq Y,\tilde{Y}\in\mathcal{O}(Ch_c)} R'_c(\tilde{Y})\subseteq R'_c(Y).$ Thus, $$Ch'_c(Y)\subseteq Y\setminus\bigcup_{\tilde{Y}\subseteq Y,\tilde{Y}\in\mathcal{O}(Ch_c)}R_c(\tilde{Y}).$$ To guarantee feasibility, $Ch'_c(Y)$ generally has to be a strict subset in the previous expression. Choosing only contracts in $Y^{\min}$ guarantees feasibility and monotonicity of the choice function $Ch_c'$.
	To guarantee the IRC condition for $Ch'_c$ on sets with $Y\neq\mathcal{U}(Y)$ we need in general to consider observable subsets of $\mathcal{U}(Y)$ rather than $Y$. The full proof of the theorem is in the appendix.
	\begin{remark}
		A corresponding result to Theorem~\ref{smoothing} does not hold for the case of non-monotonic preferences.
		Without monotonicity, equivalence requires that choices coincide
		on all sets of contracts which contain at most one contract per
		student (see Theorem~1 of \citealp{Jagadeesan2016}), and equivalence to a choice function under which contracts are substitutes and the law of aggregate demand holds implies unilateral substitutability and the law of aggregate demand. These properties are much stronger than observable substitutability and the observable law of aggregate demand.\qed 
	\end{remark}
\begin{remark}
	One can show that, for the job matching model with salaries, if the choice function in Theorem~\ref{smoothing} is generated from maximizing preferences that are quasi-linear in salaries, then the choice function and the equivalent virtual choice function are the same. Combining this observation with the maximal domain results of~\cite{Hatfield2018}, one can show that for quasi-linear preferences, choice profiles under which contracts are substitutes form a maximal domain of choice profiles generated from quasi-linear preferences for the existence of a stable (and strategy-proof) mechanism. This yields an alternative way of proving a result due to~\citealp{HatfieldMilgrom2005,HatfieldKojima2008}.\qed
\end{remark}
	Immediately from Theorem~\ref{smoothing} and Proposition~\ref{GSP} it follows that the deferred acceptance mechanism is stable and group-strategy-proof if contracts are observable substitutes and the observable law of aggregate demand holds.
	\begin{corollary}\label{DAgood}
		For each choice profile under which contracts are observable substitutes and the observable law of aggregate demand holds, the deferred acceptance mechanism is stable and group-strategy-proof. 
	\end{corollary}
	\begin{remark}\label{w00t}
		\cite{Hatfield2018} prove that observable substitutability, the observable law of aggregate demand and "non-manipulability via contractual terms" are sufficient conditions for the stability and strategy-proofness of the deferred acceptance mechanism. 
		Non-manipulability via contractual terms requires that for each profile in the domain and each college $c\in C$ the deferred-acceptance mechanism is strategy-proof on the domain of
		preference profiles where only contracts with $c$
		are acceptable for students. For the case of monotone preferences, Corollary~\ref{DAgood} strengthens the result in two directions: First, strategy-proofness can be replaced by the stronger property of group-strategy-proofness. Second,	whereas non-manipulability via contractual terms is in general independent of the other two properties, Corollary~\ref{DAgood} implies that this property is implied by observable substitutability and the observable law of aggregate demand for monotone preferences.\qed
	\end{remark}
	The assumption of observable substitutability turns out to be necessary for Theorem~\ref{smoothing}, as we show in the following theorem.
	\begin{theorem}\label{andersrum}
		If $Ch$ is equivalent to a profile $Ch'$ under which contracts are substitutes, then contracts are observable substitutes under $Ch$. If, moreover, the law of aggregate demand holds under $Ch'$, then the observable law of aggregate demand holds under $Ch$.
	\end{theorem}
	\subsection{A Maximal Domain Result}
	Next, we derive several implications of Theorems~\ref{smoothing} and~\ref{andersrum}. Combining the results with Theorem~5 of~\cite{Hatfield2018} we first characterize a maximal domain of choice profiles for the existence of a stable and strategy-proof mechanism using our notion of equivalence. In the following a {\bf
		choice domain} is a set of choice profiles
	$\mathcal{D}\subseteq\bigtimes_{c\in C}\mathcal{C}_{c}$. A choice
	domain $\mathcal{D}$ is {\bf Cartesian} if
	$\mathcal{D}=\bigtimes_{c\in C}\mathcal{D}_c$ where
	$\mathcal{D}_c\subseteq\mathcal{C}_c$ for each college $c\in C$. A
	choice function $Ch_c$ for college $c$ is {\bf unit demand} if
	$|Ch_c(Y)|\leq 1$ for all $Y\subseteq  X_c$. A choice domain is {\bf
		unital} if it includes all profiles of unit demand choice
	functions. We obtain the following corollary of Theorems~\ref{smoothing} and~\ref{andersrum}.
	\begin{corollary}\label{equiv}
		For a domain $\mathcal{D}$ of choice profiles the following statements are equivalent:
		\begin{enumerate}
			\item $\mathcal{D}$ is a maximal Cartesian, unital domain of choice profiles such that a stable and strategy-proof mechanism is guaranteed to exist,
			\item $\mathcal{D}$ is the domain of choice profiles that are equivalent to a choice profile under which contracts are substitutes and the law of aggregate demand holds.
		\end{enumerate}
	\end{corollary}
	\begin{proof}
		
		Let $\mathcal{D}_1$ be a maximal Cartesian, unital domain of choice profiles such that a stable and strategy-proof mechanism is guaranteed to exist. Let $\mathcal{D}_2$ he domain of choice profiles that are equivalent to a choice profile under which contracts are substitutes and the law of aggregate demand holds. By Theorems~\ref{smoothing} and~\ref{andersrum}, $\mathcal{D}_2$ is the domain of choice profiles under which contracts are observable substitutes and the observable law of aggregate demand holds. By Theorem~5 of~\cite{Hatfield2018}, $\mathcal{D}_1$ is a subdomain of the domain of profiles under which contracts are observable substitutes and the observable law of aggregate demand holds, thus $\mathcal{D}_1\subseteq \mathcal{D}_2$. By Proposition~\ref{GSP}, for each $Ch\in\mathcal{D}_2$, the deferred acceptance mechanism is stable and strategy-proof. Thus, by maximality $\mathcal{D}_1=\mathcal{D}_2$.

	\end{proof}
\begin{remark}
	Combining the result with Proposition~\ref{GSP}, the maximal domain result also holds if strategy-proofness is replaced by group-strategy-proofness.\qed
\end{remark}

	\subsection{An embedding result}\label{sec:embed}
	As a third corollary of Theorem~\ref{smoothing}, we obtain an
	embedding result. In recent work~\cite{Jagadeesan2016} shows that
	for each BfYC choice profile as introduced by~\cite{Sonmez2013} there
	is an equivalent choice profile such that the equivalent problem
	can be embedded into a Kelso-Crawford economy. Formally, an {\bf isomorphism}
	between a matching market with contracts $(Ch,\succeq)$ and a
	Kelso-Crawford economy $(\Sigma,u)$ is a one-to-one mapping
	$(f,w,\sigma):X\rightarrow F\times W\times\Sigma$
	such that
	\begin{enumerate}
		\item for each $x,x'\in X$,  $f(x)=f(x')$ if and only if $x'_C=x_C$, and $w(x)=w(x')$ if and only if $x_S=x_S'$,
		\item for each $x,x'\in X$ we have $$x\succ_{x_S}x'\Leftrightarrow u_{w(x)}(f(x),\sigma(x))>u_{w(x)}(f(x),\sigma(x))$$
		and $$x\succ_{x_S}\emptyset\Leftrightarrow
		u_{w(x)}(f(x),\sigma(x))>u_{w(x)}(\emptyset),$$
		\item for each $c\in C$, $Y\subseteq X_c$ and $f\in F$ such that $f(x)=f$ for each $x\in Y$ we have $$Ch_c(Y)=\text{argmax}_{Y'\subseteq Y^{\min}}u_{f}(\{(w(x),\sigma(x)):x \in Y'\}),$$
		\item for each $w\in W,f\in F$ and  $\sigma\in\Sigma\setminus \{\sigma(x):x\in X, f(x)=f,w(x)=w \}$, we have $u_{w}(f,\sigma)<u_{w}(\emptyset)$.\footnote{This item is not required in~\cite{Jagadeesan2016}, as he only considers situations in which there is the same number of contracts between any college and student.}
	\end{enumerate}
\begin{remark}
	 The above notion of an isomorphism is due to~\cite{Jagadeesan2016}. It is different from the notion introduced by~\cite{Echenique2012}, and used by~\cite{Kominers2011}, and \cite{Schlegel2015}: 
			Utility function in the Kelso-Crawford economy satisfy stronger
			regularity conditions in~\cite{Jagadeesan2016}.
			In~\cite{Echenique2012}, monotonicity of utility functions is only required for salaries corresponding to ``un-dominated" contracts
			(see the discussion in~\citealp{Schlegel2015}), whereas in
			\cite{Jagadeesan2016} monotonicity can be achieved for all salaries.
			
			To guarantee that utility functions satisfy the stronger regularity condition,~\cite{Jagadeesan2016} first constructs an equivalent choice profile. The embedding is then performed for a problem where the original choice profile is replaced by the equivalent choice profile. In contrast to this, the embedding result of~\cite{Echenique2012} establishes an isomorphism (with his weaker notion of isomorphism) for the original market with contracts, without the need to first construct an equivalent choice profile. Put differently, the embedding result of~\cite{Jagadeesan2016} establishes an isomorphism between the cumulative offer algorithm in the original problem and the salary adjustment process in the Kelso-Crawford economy, whereas the embedding result of~\cite{Echenique2012} establishes an isomorphism between the sets of stable allocations in the original problem and the Kelso-Crawford economy, but under a weaker notion of an isomorphism.
			
			In the following, we will establish an embedding result in the sense of~\cite{Jagadeesan2016}. An embedding result in the sense of~\cite{Echenique2012} does not, in general, hold in our setting.
\qed
\end{remark}
	\begin{corollary}\label{embed}
		For each choice profile $Ch$ such that contracts are observable
		substitutes and the observable law of aggregate demand holds, there
		exists an equivalent choice profile $Ch'$, such that for each preference profile $\succeq$, the market $(Ch',\succeq)$ is isomorphic to a Kelso-Crawford economy $(\Sigma,u)$ such that workers are gross substitutes for firms.
	\end{corollary}
	\begin{proof}
		By Theorem~\ref{smoothing}, there exist an equivalent profile $Ch'$ under which contracts are substitutes and the law of aggregate demand holds. Moreover, $Ch'$ can be chosen to be monotone in contract-terms for each college. For each $c\in C$, $Ch_c'$ can be rationalized (see Theorem~1 in \citealp{alva2018}) by a strict preference relation $\succeq'_c$ over $\mathcal{A}_c$. The preference relation $\succeq'_c$ can be chosen to be monotonic in contract terms. Let $(\succeq_s)_{s\in S}$ be a preference profile for students. For the preference profile $((\succeq'_c)_{c\in C},(\succeq_s)_{s\in S})$, we can construct as in the proof of Theorem~1 in~\cite{Echenique2012} a corresponding Kelso-Crawford economy $(\Sigma,u)$ such that Gross Substitutability for firms hold. Since $\succeq'_c$ is monotonic in contract terms for each $c\in C$ there are no undominated contracts in $(Ch',\succeq)$. Thus, the embedding is one-to-one and satisfies our definition of an isomorphism, in particular monotonicity of the utility functions in the Kelso-Crawford economy is satisfied.
		\end{proof}
	\begin{remark}
		While we use the same notion of isomorphism as
		~\cite{Jagadeesan2016}, we do not assume quasi-linearity of firm
		utility functions in the Kelso-Crawford economy. In this sense, our
		result is weaker. However, our result applies to a larger domain
		of choice functions.\qed
	\end{remark}
\section{Conclusion}
We have studied a model of matching with contracts with a natural monotonicity condition on applicants' preferences that is satisfied in many practically important matching markets, e.g., for the matching of students to colleges with different stipends or fees, the medical match with flexible salaries, or the cadet-to-branch match if cadets rank branch-of-choice contracts monotonically. Our domain restriction allowed us to obtain a characterization of a maximal domain of choice profiles for which a stable and strategy-proof mechanism exists, which - arguably - is simpler than the corresponding characterization for the general model with non-monotone preferences~\cite[]{Hatfield2018}. This indicates that the domain restriction to monotone preferences removes some complexity from the matching model with contracts and, hence, simplifies its analysis while practical relevance is maintained since the monotonicity assumption is satisfied in many real-world matching markets. We hope that these features will make the model with monotone preferences also a useful tool for future research on matching markets with contracts.

Our results also raise interesting questions for the general matching model with contracts: For our model and choice domain, strategy-proofness of the deferred acceptance mechanism implies its group-strategy-proofness.  This is established through the construction of an equivalent choice profile under which contracts are substitutes. Substitutability implies a lattice structure for the set of stable allocations under the equivalent choice profile, and with the lattice structure it is easy to show that strategy-proofness implies group-strategy-proofness (see e.g.~the discussion in~\citealp{Barbera2016}). For the general matching model with contracts similar constructions, such as the substitutable completion procedure of~\cite{Hatfield2015}, are known. For the same reason as before - the set of stable allocations forms a lattice under the substitutable completion of the choice profile - strategy-proofness of the deferred acceptance mechanism implies its group-strategy-proofness. However, in contrast to our model, for the general model with non-monotone preferences, a substitutable completion does not exist for each profile in the maximal domain of~\cite{Hatfield2018}. It, thus, remains an interesting open question under which conditions the strategy-proofness of a stable matching mechanism for the general model of matching with contracts implies its  group-strategy-proofness.
	\bibliographystyle{RM}
	\bibliography{matching}
	\appendix
	
	\section{Proof of Theorem~\ref{smoothing}}
	The proof relies on the following two lemmata that we prove first.
	\begin{lemma}\label{order1}
		If contracts are observable substitutes  under $Ch_c$, then for each $Y\subseteq X_c$ and for any two maximal sequences $x^1,\ldots,x^{\tau}$ and $y^1,\ldots,y^{\tau'}$
		in $Y$ that are observable under $Ch_c$ and generated from monotone
		preferences we have
		$\{x^1,\ldots,x^{\tau}\}=\{y^1,\ldots,y^{\tau'}\}$.
	\end{lemma}
	\begin{proof}
		The proof strategy is due to~\cite{HirataKasuya2014}. 
		We use induction on the size of the set $Y$.
		\newline
		\noindent {\bf Induction Basis:} If $|Y|=0$, then $Y=\emptyset$ and trivially the empty sequence is the only observable sequence under $Ch_c$ generated from monotone preferences in $Y$.     \newline
		\noindent {\bf Induction Assumption:} For each $Y\subseteq  X_s$ with
		$|Y|\leq n$  for any two maximal sequences $x^1,\ldots,x^{\tau}$ and $y^1,\ldots,y^{\tau'}$
		in $Y$ that are observable under $Ch_c$ and generated from monotone
		preferences we have
		$\{x^1,\ldots,x^{\tau}\}=\{y^1,\ldots,y^{\tau'}\}$.
		\newline
		\noindent {\bf Induction Step:}
		Let $|Y|=n+1$.	Consider two maximal sequences $x^1,\ldots,x^{\tau}$ and $y^1,\ldots,y^{\tau'}$
		in $Y$ that are observable under $Ch_c$ and generated from monotone
		preferences. Suppose
		$\{x^1,\ldots,x^{\tau}\}\neq\{y^1,\ldots,y^{\tau'}\}$.  Then
		w.l.o.g.~there is a $1\leq t\leq \tau'$ with
		$y^t\notin\{x^1,\ldots,x^{\tau}\}$. Choose the smallest such $t$ and
		consider the set $Y':=Y\setminus\{y^t\}$. 
		The sequence $y^1,\ldots,y^{t-1}$ is contained in $Y'$, generated from monotone preferences and is observable
		under $Ch_c$. Extend $y^1,\ldots y^{t-1}$ to a maximal sequence
		$y^1,\ldots,y^{t-1},\tilde{y}^t,\ldots,\tilde{y}^{\tilde{\tau}}$ in
		$Y'$ that is observable under $Ch_c$ and generated from monotone
		preferences.  Since
		$y^t\notin\{x^1,\ldots,x^{\tau}\}$, sequence $x^1,\ldots,x^{\tau}$ is a maximal sequence in $Y'$ that is observable under $Ch_c$ and generated from monotone preferences. By the induction assumption, we have
		$\{y^1\,\ldots,y^{t-1},\tilde{y}^{t},\ldots,\tilde{y}^{\tilde{\tau}}\}=\{x^1,\ldots,x^{\tau}\}$.
		Moreover, $y^t_{S}\in Ch_c\{x^1,\ldots,x^{\tau}\}_{S}$, as
		otherwise $x^1,\ldots,x^{\tau},y^t$ would be observable under $Ch_c$ and generated from
		monotone preferences, contradicting the maximality of $x^1,\ldots,x^{\tau}$ in $Y$. Since
		$y^1,\ldots,y^t$ is observable and generated from monotone preferences, we have $y\in R_c\{y^1,\ldots,y^{t-1}\}$ for each $y\in X_c$ with $y_S=y^t_S$ and $y_T\vartriangleright y^t_T$. By observable substitutability, we	have  $y\in
			R_c\{y^1\,\ldots,y^{t-1},\tilde{y}^{t},\ldots,\tilde{y}^{\tilde{\tau}}\}$ for each $y\in X_c$ with $y_S=y^t_S$ and $y_T\vartriangleright y^t_T$. As $y^t\notin \{y^1,\ldots,y^{t-1},\tilde{y}^{t},\ldots,\tilde{y}^{\tilde{\tau}}\}$ this implies $y^t_{S}\notin
		Ch_c\{y^1\,\ldots,y^{t-1},\tilde{y}^{t},\ldots,\tilde{y}^{\tilde{\tau}}\}_{S}$. Since $y^t_{S}\in Ch_c\{x^1,\ldots,x^{\tau}\}_{S}$, as previously observed, this contradicts
		$$\{y^1\,\ldots,y^{t-1},\tilde{y}^{t},\ldots,\tilde{y}^{\tilde{\tau}}\}=\{x^1,\ldots,x^{\tau}\}.$$
		
	\end{proof}
	\begin{lemma}\label{order2}
		If contracts are observable substitutes under $Ch_c$, then
		\begin{enumerate}
			\item for $Y,Z\in\mathcal{O}(Ch_c)$ with $Z\subseteq Y$ we have $$R_c(Z)\subseteq R_c(Y).$$ If moreover, the observable law of aggregate demand holds for $Ch_c$, then $$|Ch_c(Z)|\leq|Ch_c(Y)|,$$
			\item for $Ch_{-c}\in\bigtimes_{c'\in C\setminus\{c\}}\mathcal{C}_c$ and $\succeq\in\bigtimes_{s\in S}\mathcal{R}_s$ the sets of contracts $Y_1\subseteq Y_2\subseteq \ldots\subseteq Y_{\tau}\subseteq X_c$ proposed to $c$ during the cumulative offer algorithm in $(Ch,\succeq)$ are observable under $Ch_c$.

		\end{enumerate}
	\end{lemma}
	\begin{proof}

		For the first part let $Y,Z\in\mathcal{O}(Ch_c)$ with $Z\subseteq Y$. Since $Z$ is observable, we can find a sequence $x^1,\ldots,x^{|Z|}$ generated from monotone preferences with $Z=\{x^1,\ldots,x^{|Z|}\}$ that is observable under $Ch_c$. Maximally extend the sequence in $Y$ to obtain a sequence $x^1,\ldots,x^{|Z|},x^{|Z|+1},\ldots,x^{\tau}$ generated from monotone preferences that is observable under $Ch_c$.  By Lemma~\ref{order1} and observability of $Y$ we have $Y=\{x^1,\ldots,x^{\tau}\}$ and $\tau=|Y|$. The result follows by applying observable substitutability (resp. the observable law of aggregate demand) to the sequence $x^1,\ldots,x^{|Y|}.$
		
		For the second part, note that for $t=1$ the result holds trivially: by the definition of the cumulative offer algorithm each contract in $Y_1$ is the unique most preferred contract for the involved agent among all contracts. Thus, ordering the contracts in $Y_1$ arbitrarily yields an observable sequence under $Ch_c$ generated from monotone preferences.
		Next let $1< t\leq \tau$ and suppose that $Y_1,\ldots,Y_{t-1}\in\mathcal{O}(Ch_c)$. Take an observable sequence $x^{1},\ldots,x^{|Y_{t-1}|}$ under $Ch_c$ generated from monotone preferences such that $Y_{t-1}=\{x^1,\ldots,x^{|Y_{t-1}|}\}$ and enumerate contracts in $Y_{t}\setminus Y_{t-1}=\{x^{|Y_{t-1}|},\ldots x^{|Y_t|}\}$ arbitrarily. By the definition of the CO algorithm, for each $x\in Y_t\setminus Y_{t-1}$ each $x'\in X_c$ with $x'_S=x_s$ and $x'\vartriangleright x$ has been rejected in some previous round of the algorithm, i.e. there is a $t'<t$ with $x'\in R_c(Y_{t'})$. By observable sustitutability and the second part applied to $Z=Y_{t'}$ and $Y=Y_{t-1}$, we have $x'\in R_c(Y_{t-1})=R_c\{x^1,\ldots,x^{|Y_{t-1}|}\}$. Thus, for each $x\in Y_t\setminus Y_{t-1}$ we have $x_S\notin Ch_c(Y_{t-1})_S$. Observe furthermore that $Y_t\setminus Y_{t-1}$ contains at most one contract per student. Thus, repeated application of observable substitutability shows that the sequence $x^1,\ldots x^{|Y_t|}$ is observable.
	\end{proof}

	With the two lemmata we can prove the theorem.
	
	\begin{proof}
		For each $Y\subseteq X_c$, let $Y^{\vee}\in\mathcal{O}(Ch_c)$ be defined by $Y^{\vee}:=\{x^1,\ldots,x^{\tau}\}$ where $x^1,\ldots,x^{\tau}$ is a maximal observable subsequence under $Ch_c$ of $\mathcal{U}(Y)$ generated by monotone preferences. By Lemma~\ref{order1}, $Y^{\vee}$ is well-defined and $\tilde{Y}\subseteq Y^{\vee}$ for each $\tilde{Y}\in\mathcal{O}(Ch_c)$ with $\tilde{Y}\subseteq \mathcal{U}(Y)$. Thus, by the first part of Lemma~\ref{order2},
		$$\bigcup_{\tilde{Y}\subseteq \mathcal{U}(Y),\tilde{Y}\in\mathcal{O}(Ch_c)}R_c(\tilde{Y})=R_c(Y^{\vee}),$$
		and therefore the virtual choice function is given by $$Ch_c'(Y)=Y^{\min}\setminus R_c(Y^{\vee}).$$
		First note that $Ch_c'$ satisfies our assumptions on choice
		functions: By definition $Ch_c'(Y)\subseteq Y^{\min}\subseteq Y$,
		and since $Y^{\min}$ contains at most one contract per student, also
		$Ch'_c(Y)$ contains at most one contract per student. The IRC
		condition for $Ch_c'$ will follow from substitutability and the law
		of aggregate demand (see~\citealp{AygunSonmez2012a}) for $Ch'_c$ which we
		will establish next.
		
		Let $Z\subseteq Y\subseteq  X_c$. First we show substitutability, i.e.~$R_c'(Z)\subseteq R_c'(Y)$. Let $x\in R_c'(Z)$. If $x\in Y\setminus Y^{\min}$, then $x\in R'_c(Y)$, as $Ch_c'(Y)\subseteq Y^{\min}$. If $x\in Y^{\min}$, then $x\in Z^{\min}$ and therefore $x\in R_c(\tilde{Z})$ for a $\tilde{Z}\in \mathcal{O}(Ch_c)$ with $\tilde{Z}\subseteq \mathcal{U}(Z)\subseteq\mathcal{U}(Y)$. Thus, also $x\in R_c'(Y)$.

		By definition of $Ch_c'$ and $Y^{\vee}$ (resp.~$Z^{\vee}$) we have $Ch_c'(Y)_S=Ch_c(Y^{\vee})_S$ and $Ch_c'(Z)_S=Ch_c(Z^{\vee})_S$. Lemma~\ref{order1} implies that $Z^{\vee}\subseteq Y^{\vee}$. Thus, by the first part of Lemma~\ref{order2} we 
		obtain the law of aggregate demand for $Ch'_c$, as
		$$|Ch'_c(Z)|=|Ch'_c(Z)_S|=|Ch_c(Z^{\vee})|\leq|Ch_c(Y^{\vee})|=|Ch'_c(Y)_S|=|Ch'_c(Y)|.$$ 
			
		To show that $Ch_c'$ is equivalent to $Ch_c$, let $Ch_{-c}\in\bigtimes_{c'\neq c}\mathcal{C}_c$ and $\succeq\in\bigtimes_{s\in S}\mathcal{R}_c$. By the second part of Lemma~\ref{order2}, the sets of proposed contracts $Y_1\subseteq Y_2\subseteq \ldots\subseteq Y_{\tau}$ to $c$ during the CO-algorithm in $(Ch,\succeq)$ are observable.
		Now note that for each $Y\in\mathcal{O}(Ch_c)$ we have $Y=Y^{\vee}$ and, moreover, by observable substitutability $Ch_c(Y)\subseteq Y^{\min}$. Thus, $Ch'_c(Y)=Y^{\min}\setminus R_c(Y)= Y\setminus R_c(Y)=Ch_c(Y)$. As $Ch_c'$ and $Ch_c$ agree on observable sets, we have $Ch_c'(Y_t)=Ch_c(Y_t)$ for $t=1,\ldots, \tau$. Hence $Ch_c'$ and $Ch_c$ are equivalent.
	\end{proof}
	\section{Proof of Theorem~\ref{andersrum}}
	\begin{proof}
		
		Let $c\in C$ and consider a sequence $x^1,\ldots,x^{\tau}$
		that is observable under $Ch_c$ and generated from monotone
		preferences. We show that
		$Ch_c'\{x^1,\ldots,x^{\tau}\}=Ch_c\{x^1,\ldots,x^{\tau}\}$. Since contracts are substitutes under
		$Ch_c'$ this will imply that contracts are observable substitutes under $Ch_c$. Moreover, if the law of aggregate demand holds under $Ch'_c$, then this will imply the observable law of aggregate demand for $Ch_c$. In the following,
		we denote by $\succeq^0\in\bigtimes_{s\in S}\mathcal{R}_s$ a profile
		such that no contract is acceptable, and for $1\leq t\leq {\tau}$ we
		denote by $\succeq^t\in\bigtimes_{s\in S}\mathcal{R}_s$ a profile such that $x\succ^t_{x_{S}}\emptyset$ for
		$x\in\{x^1,\ldots,x^t\}$ and $\emptyset\succ^t_{x_{S}}x$ for
		$x\notin\{x^1,\ldots,x^t\}$. First note that
		$$\emptyset=Ch_c'(\emptyset)=\mathcal{CO}(Ch',\succeq^0)=\mathcal{CO}(Ch',\succeq^0)=Ch_s(\emptyset).$$
		Next we show that if for $0\leq t<t'\leq \tau$ we have
		$$Ch'_c\{x^1,\ldots,x^{t}\}=\mathcal{CO}(Ch',\succeq^t)=\mathcal{CO}(Ch,\succeq^t)=Ch_c\{x^1,\ldots,x^{t}\}$$
		then
		$$Ch'_c\{x^1,\ldots,x^{t'}\}=\mathcal{CO}(Ch',\succeq^{t'})=\mathcal{CO}(Ch,\succeq^{t'})=Ch_c\{x^1,\ldots,x^{t'}\}.$$
		Since $Ch_{c}\{x^1,\ldots,x^{t}\}=Ch'_c\{x^1,\ldots,x^{t}\}$ for
		$0\leq t<t'$, observability of  $x^1,\ldots,x^{t'}$ under $Ch_c$
		implies observability of  $x^1,\ldots,x^{t'}$ under $Ch'_c$.
		
		Let
		$Y$ be the set of proposed contract during the CO-algorithm in $(Ch',\succeq^{t'})$. By (observable) substitutability of $Ch_c'$, the second part of Lemma~\ref{order2} applied to $(Ch',\succeq^{t'})$  implies that $Y$ is observable under $Ch'_c$. Thus, there is a sequence $y^1,\ldots,y^{t''}$ that is observable under $Ch_c$ and generated from monotone preferences such that $Y=\{y^1,\ldots,y^{t''}\}\subseteq\{x^1,\ldots,x^{t'}\}$. Note furthermore that $y^1,\ldots,y^{t''}$ is a maximal such sequence in the set $\{x^1,\ldots,x^{t'}\}$, since otherwise not all students who are unmatched in the outcome of the CO algorithm in $(Ch',\succeq^{t'})$ have proposed under all acceptable contracts.
		Since $x^1,\ldots,x^{t'}$ is observable under $Ch_c'$ this implies $Y=\{y^1,\ldots,y^{t''}\}=\{x^1,\ldots,x^{t'}\}$.
		By equivalence of $Ch$ and $Ch'$, we have 	$$Ch'_c\{x^1,\ldots,x^{t'}\}=\mathcal{CO}(Ch',\succeq^{t'})=\mathcal{CO}(Ch,\succeq^{t'})=Ch_c\{x^1,\ldots,x^{t'}\}.$$
		
	\end{proof}
	
\end{document}